\documentclass[11pt,a4paper]{article}
\usepackage[T1]{fontenc}
\usepackage[utf8]{inputenc}
\usepackage{lmodern}
\usepackage[margin=28mm]{geometry}
\usepackage{amsmath,amssymb,amsthm,mathtools}
\usepackage{microtype,booktabs,enumitem}
\usepackage[hidelinks]{hyperref}
\usepackage[capitalize,nameinlink,noabbrev]{cleveref}
\usepackage{needspace}
\usepackage{placeins}
\setlist[enumerate]{leftmargin=*,itemsep=3pt,topsep=5pt}
\setlist[itemize]{leftmargin=*,itemsep=3pt,topsep=5pt}
\newtheorem{theorem}{Theorem}
\newtheorem{lemma}[theorem]{Lemma}
\newtheorem{proposition}[theorem]{Proposition}

\theoremstyle{remark}

\newcommand{\Greedy}{\textnormal{\textsc{Greedy}}}
\newcommand{\RePair}{\textnormal{\textsc{RePair}}}
\newcommand{\val}{\operatorname{val}}

\newcommand{\Fac}{\operatorname{Fac}}

\newcommand{\proofheading}[1]{\par\smallskip\noindent\textit{#1.}\ }
\title{A Non-constant Lower Bound for\newline Grammar-Based Compression with Greedy}
\date{}
\hypersetup{pdftitle={A Non-constant Lower Bound for Grammar-Based Compression with Greedy},pdfsubject={Grammar-based compression; approximation lower bounds}}
\allowdisplaybreaks[1]
\author{Danny Hucke}
\hypersetup{pdfauthor={Danny Hucke}}

\begin{document}
\maketitle

\begin{abstract}
We prove a lower bound of $\Omega(\log n/\log\log n)$ on the approximation ratio of the global grammar-based compression algorithm \Greedy{}. To our knowledge, the previously best lower bound was a constant, and the existence of a nonconstant lower bound had remained open for more than twenty years. Our bound holds on an infinite family of words of length $n$, over alphabets of growing size, for every execution using left-to-right occurrence replacement and arbitrary tie-breaking. The lower bound is also formally verified in Lean~4.
\end{abstract}

\section{Introduction}

The smallest grammar problem asks for a smallest context-free grammar generating a given word and no other word. We measure a grammar by the total length of its right-hand sides. This problem and several natural approximation algorithms were studied systematically by Charikar et al.~\cite{charikar}. One of these algorithms is \Greedy{}, based on the greedy textual substitution methods of Apostolico and Lonardi~\cite{apostolico-lonardi-1998,apostolico-lonardi-2000}. In the global formulation analysed by Charikar et al., each round replaces a repeated factor throughout the current grammar, choosing a factor that gives the largest immediate reduction in grammar size.

Charikar et al.~\cite{charikar} established a constant lower bound for \Greedy{} in 2005. Hucke and Reh~\cite{hucke-reh} later improved this bound to approximately $1.34847$, using unary inputs. To the best of our knowledge, obtaining a nonconstant lower bound for \Greedy{} has remained open since the work of Charikar et al., more than twenty years ago.

Our lower bound reaches a scale already highlighted by Charikar et al.~\cite{charikar} through their connection between grammar compression and addition chains. Their reduction shows that a polynomial-time grammar compressor with approximation ratio $o(\log n/\log\log n)$ would improve on Yao's method~\cite{yao} for constructing an addition chain containing a prescribed set of integers. This suggested that their original lower bounds for global compressors might substantially underestimate their worst-case behaviour. Bannai et al.~\cite{bannai} subsequently established an $\Omega(\log n/\log\log n)$ lower bound for \RePair{}. Here, we establish the same lower bound for \Greedy{}.

Lower-bound constructions for grammar compressors must account for the substitutions that the algorithm actually performs. For \Greedy{}, this includes substitutions in rules created during earlier rounds. We address this difficulty by forcing a sequence of substitutions while allowing arbitrary intervening compression of already processed material. The target words encode a de Bruijn word, following a general approach also used in the analysis of \RePair{} by Bannai et al.~\cite{bannai}. Auxiliary words based on a power-free morphism make the required substitutions preferable to every competing substitution that could change a target.

\Needspace{17\baselineskip}
Throughout the paper, logarithms are to base two. Our main result is the following.

\begin{theorem}\label{thm:main}
For every $h=2^r$ with $r\geq 10$, there is an explicitly specified word $w_h$ over an alphabet of size $h^2+4h$ such that, writing $n_h=|w_h|$,
\[
 h^2+4h\leq g(w_h)\leq 600h^2,
 \qquad 2^h\leq n_h\leq 2^{50h}.
\]
Every terminal grammar $G_h$ produced by global \Greedy{} on $w_h$, using left-to-right occurrence replacement and any choice among maximizing factors, satisfies
\[
 |G_h|\geq \frac{h^3}{8(4r+1)}.
\]
Consequently,
\[
 \frac{|G_h|}{g(w_h)}
 \geq \frac{h}{4800(4r+1)}
 \geq \frac{\log n_h}{1200000\log\log n_h}.
\]
\end{theorem}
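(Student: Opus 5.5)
The plan is to build $w_h$ from a binary de Bruijn word of order $r$, as in the analysis of \RePair{} by Bannai et al.~\cite{bannai}, and then follow every admissible run of \Greedy{} closely enough to show that the encoded targets are never compressed efficiently.

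\textbf{Construction and upper bound on $g(w_h)$.} Take a binary de Bruijn word of order $r$, of length $h=2^r$. Introduce $h^2$ fresh letters for the target words and $4h$ auxiliary letters. The $4h$ auxiliary letters are arranged through a power-free morphism (for example a Thue--Morse-type square-free morphism) so that blocks of the form $\phi^{k}(\cdot)$ have exponential length $2^{\Theta(h)}$. This yields $2^h\le n_h\le 2^{50h}$.

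To bound $g(w_h)$ from above, write an explicit grammar. Each auxiliary block $\phi^k(a)$ costs $O(1)$ per iteration and letter, so $O(h)$ in total per auxiliary letter. Each target needs only $O(1)$ rules beyond shared structure. Counting gives at most $600h^2$. The lower bound $g(w_h)\ge h^2+4h$ holds trivially, because every letter of the alphabet must occur in the grammar.

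\textbf{Forcing the Greedy run.} This is the core of the argument. I would prove an invariant by induction over rounds, indexed by phases $j=1,\dots,r$ (or $4r+1$ phases), stated as follows. After phase $j$, every target word has been rewritten into a canonical form in which its de Bruijn content is split into pieces of length $2^{j}$. These pieces are represented by nonterminals that are pairwise distinct across targets, while arbitrary further compression may have happened inside already processed material.

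The key step is a comparison of savings. In each round, the factor that \Greedy{} is forced to choose next saves strictly more than any competing factor that could cut across a target. Competing factors overlapping target boundaries occur rarely, because de Bruijn windows are unique and the morphism is power-free, so such factors have few non-overlapping occurrences. The intended factors, by contrast, have many occurrences supplied by the auxiliary words. Left-to-right replacement together with power-freeness controls overlapping occurrences. Arbitrary tie-breaking is handled by showing that every maximizer either is intended or lies inside processed material, where it does not affect the targets.

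\textbf{Counting and final ratio.} At the end, each of $\Omega(h^2)$ targets still contributes about $h/(8(4r+1))$ symbols, because the de Bruijn windows prevent sharing. This gives $|G_h|\ge h^3/(8(4r+1))$. Dividing by $600h^2$ gives $h/(4800(4r+1))$. Then $\log n_h\le 50h$ and $\log\log n_h\ge r$ (since $n_h\ge 2^h$) yield the stated bound in terms of $n_h$, using $r\ge 10$ to absorb constants.

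The main obstacle is the forcing invariant, in particular the uniform savings comparison against every factor of the current grammar, including factors inside rules created in earlier rounds. I would first try to reduce this to a combinatorial lemma that bounds the number of non-overlapping occurrences of any factor that is not contained in processed material.
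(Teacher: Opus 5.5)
Your outline names the right ingredients (de Bruijn content, a power-free auxiliary, forced substitutions, a counting step), but the mechanism behind each one is missing, and some of your stated choices would fail.

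\textbf{Construction.} A de Bruijn word of order $r$ has only $h$ letters, which is far too little content. The count must produce about $h^3/8$ distinct factors of length $\Theta(r)$. In an encoding like the paper's, a factor $A_{j_1}\cdots A_{j_L}$ records a window of the binary content together with an offset $j_1<h$. With $h$ bits this gives only $O(h^2)$ distinct factors, so the count certifies at most about $h^2/r$, which is below $g(w_h)\geq h^2$. The paper takes order $2r$ and cuts $W=\varphi(D)$, of length $2h^2$, into $h$ chunks of length $2h$. It uses $h^2$ unary targets $a^{n_{s,i}}$ whose base-$64$ digits are $C_s[1]\cdots C_s[i]$ for $h<i\leq 2h$. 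Prefix chaining $n'=64n+b$ keeps $g(w_h)=O(h^2)$, while the last $h$ digits form sliding windows.

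\textbf{Forcing.} More seriously, you never say what the ``intended factors'' are or how the auxiliaries come to contain them. \Greedy{} creates a nonterminal only for a factor with at least two nonoverlapping occurrences, and nothing in your construction supplies these. In the paper the auxiliaries share the unary letter with the targets: $F_j$ pads $\mu^{e_j}(0)$ with runs $a^{64^j}$. So in phase $t$ the factor $X^B$, with $X=A_{t-1}$, gains at least $M_t(B-1)-B>H$ from auxiliary $t$ alone, which exceeds the total target length. The $h$ forced substitutions $A_t\to A_{t-1}^B$ (not $r$ doubling phases) then read the target lengths one base-$64$ digit at a time. Ruling out competitors needs three separate estimates:
\begin{itemize}
\item a per-run comparison for $X^d$ with $d<B$;
\item the scale separation $M_j/M_{j+1}=19^3>B^2$ for $d>B$;
\item the bound $\tfrac{15}{16}M_j(L_j+1)$ for factors containing a marker of an unfinished auxiliary.
\end{itemize}
The last bound uses spacing at least $4m/3$ between equal length-$m$ factors, which comes from $(7/4)^+$-freeness. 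Square-freeness gives only spacing $>m$, and the same estimate then yields about $M_jL_j$. That exceeds $X^B$'s benchmark $\tfrac{63}{64}M_jL_j$ in that auxiliary. You also need an actual invariant confining every intervening substitution to a single processed class (a finished auxiliary or a power layer), including substitutions inside earlier rules; you only assert this.

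\textbf{Counting.} Your counting step is not valid as stated. \Greedy{} keeps running after the forced phases. It may share material across targets and even rewrite the rules defining the primary symbols. So ``each target still contributes about $h/(8(4r+1))$ symbols'' does not follow from uniqueness of windows. The paper argues globally:
\begin{itemize}
\item After phase $h$, target $(s,i)$ ends with $U_{s,i}=A_{h-1}^{C_s[i-h+1]}\cdots A_0^{C_s[i]}$.
\item Every factor of $\varphi(D)$ of length at least $L=4r+2$ occurs at most twice, so the words $U_{s,i}$ contain at least $h^3/8$ distinct length-$L$ factors.
\item Freezing $A_0,\dots,A_h$ as letters, carrying interpretations through all later substitutions, and erasing $A_h$ gives a grammar of size at most $|G_h|$. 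Its start word $z$ contains every $U_{s,i}$.
\item The distinct-factor bound $|\Fac_L(z)|\leq|G_h|(L-1)$ then yields $|G_h|\geq h^3/(8(4r+1))$.
\end{itemize}
Your closing arithmetic with $\log n_h\leq 50h$ and $\log\log n_h\geq r$ is correct once this bound is in place.
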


\paragraph{Formal verification.}
The lower bound has also been formalised in Lean~4. The formalisation covers the construction and the analysis of all complete executions of \Greedy{}, including the finite morphism calculations. The proof sources and instructions for reproducing the verification are provided in~\cite{lean-proof}; \cref{app:lean} summarises their scope.

\section{Preliminaries}\label{sec:preliminaries}

A \emph{straight-line program} (SLP) over a finite alphabet $\Sigma$ consists of finitely many nonterminals, a designated start nonterminal $S$, and one production $A\to\alpha_A$ for each nonterminal. Every right-hand side is a nonempty word over terminals and nonterminals, and the dependency graph is acyclic. Each symbol $A$ therefore has a uniquely determined terminal expansion $\val(A)$. The word generated by the SLP is $\val(S)$. Its size is
\[
 |G|=\sum_A |\alpha_A|,
\]
and $g(w)$ denotes the minimum size of an SLP generating $w$. A notation such as $A^k$ in a production abbreviates $k$ literal copies of $A$; it contributes $k$ to the size.

\paragraph{The global algorithm.}
\Greedy{} starts with $S\to w$. For a factor $v$ of the current right-hand sides, let $c_G(v)$ be the sum, over these right-hand sides, of the maximum numbers of nonoverlapping occurrences of $v$. A factor is \emph{eligible} if $|v|\geq 2$ and $c_G(v)\geq 2$. Replacing its occurrences by a fresh nonterminal $Y$ and then adding $Y\to v$ reduces the grammar size by
\begin{equation}\label{eq:gain}
 \Delta_G(v)=c_G(v)(|v|-1)-|v|=(c_G(v)-1)|v|-c_G(v).
\end{equation}
Occurrences are chosen from left to right in every old right-hand side. This selects a maximum nonoverlapping set: replacing the first occurrence in any optimal set by the leftmost occurrence cannot obstruct a later occurrence, and the same argument applies inductively to the remaining suffix.

In the definition of the global algorithms in~\cite{charikar}, an eligible factor is maximal if no strictly longer factor has at least its occurrence count. \Greedy{} maximizes~\eqref{eq:gain} among such factors. Equivalently, it maximizes~\eqref{eq:gain} among all eligible factors. Indeed, if lengths $\ell<L$ and counts $2\leq c\leq C$ witness nonmaximality, then
\[
 \bigl(C(L-1)-L\bigr)-\bigl(c(\ell-1)-\ell\bigr)
 \geq (c-1)(L-\ell)>0.
\]
Every maximizer among the finitely many eligible factors is therefore maximal. In particular, any eligible factor supplies a lower bound on the gain of the next chosen factor.

The algorithm stops when no eligible factor remains; its resulting grammar is called terminal. Substitution preserves the generated word and acyclicity. For the latter, a new passage $U\to Y\to V$ in the dependency graph can be replaced by the old edge $U\to V$; a new cycle would thus give an old cycle. To check the word, interpret $Y$ as the old expansion of $v$ and keep all old symbol values. These values satisfy all the new rules. Finally, every eligible gain is nonnegative, every right-hand side remains nonempty, and each round adds one rule. Hence every execution terminates after at most $|w|-1$ rounds. We allow rounds of gain zero; all substitutions forced below have positive gain.

\paragraph{Counting factors.}
We use a form of the usual distinct-factor bound that also allows erasing terminals. Write $\Fac_L(z)$ for the set of length-$L$ factors of a word $z$.

\begin{lemma}\label{lem:capacity}
Let $G$ be an SLP with start symbol $S$. Suppose that each symbol is interpreted as a word over an alphabet $\Gamma$, all rule equations are satisfied, and each terminal is interpreted as either one letter or the empty word. If $z$ is the interpretation of $S$, then, for every $L\geq 2$,
\[
 |\Fac_L(z)|\leq |G|(L-1).
\]
\end{lemma}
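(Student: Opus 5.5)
We prove the bound by charging each length-$L$ factor of $z$ to a position inside a right-hand side of $G$. For a factor occurrence $z[i..i+L-1]$ we descend the derivation tree of $S$, stopping at the lowest symbol $A$ whose interpretation contains the whole occurrence. Such a symbol exists because $z$ itself is the interpretation of $S$. This $A$ cannot be a terminal, since a terminal's interpretation has length at most $1<L$. So $A$ is a nonterminal with rule $A\to X_1\cdots X_k$, and since the rule equations hold, $\val(A)$ is the concatenation of the interpretations $u_1\cdots u_k$ of the $X_j$. By minimality of $A$, the occurrence lies in no single $u_j$. Hence it crosses at least one boundary between consecutive blocks $u_j$ and $u_{j+1}$, where empty blocks are allowed.

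We make the charge canonical as follows. Let $j$ be the first index such that the occurrence starts inside $u_1\cdots u_j$ but is not contained in it. Equivalently, $j$ is the block containing the occurrence's starting letter; the start lies in some nonempty block, and we take that block. The occurrence then starts in $u_j$ and extends past its end, so it is determined by the pair (rule position $(A,j)$, offset $t$). Here $t$ is the number of letters of the occurrence lying in $u_j$, and $1\le t\le \min(|u_j|,L-1)$. Indeed, the factor is $\text{suffix}_t(u_j)$ followed by the first $L-t$ letters of $u_{j+1}\cdots u_k$, and this second part is fully determined by the rule and the interpretation. There is a subtlety: the factor might run beyond $\val(A)$. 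That cannot happen, because the occurrence is contained in $\val(A)$. It could happen, however, that the remaining $L-t$ letters do not fit in $u_{j+1}\cdots u_k$. In that case no factor is charged to that pair, and the count only improves.

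Each distinct factor in $\Fac_L(z)$ has at least one occurrence, hence at least one charged pair, and the pair determines the factor. Therefore $|\Fac_L(z)|$ is at most the number of pairs $((A,j),t)$ with $j<k$ and $1\le t\le L-1$. We can bound this loosely by $\sum_A |\alpha_A|\,(L-1)=|G|(L-1)$, since each rule position $j$ contributes at most $L-1$ offsets.

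The main point of care is the handling of empty terminals. Blocks may be empty, so "crossing a boundary" must be defined via the starting letter's block; this avoids double-counting and keeps the choice well-defined. Apart from this, the argument is routine, and the bound needs no counting subtler than at most $L-1$ offsets per right-hand-side position.
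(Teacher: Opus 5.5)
Your proof is correct and is essentially the paper's argument: you descend the derivation tree to the lowest node containing the occurrence, charge the occurrence to a triple consisting of a rule, a boundary between consecutive children, and an offset $t\in\{1,\ldots,L-1\}$, and then count. Your canonical boundary, the one right after the block containing the starting letter, is an explicit instance of the paper's ``choose one such boundary.'' Your count over pairs with $j<k$ is exactly the paper's $(L-1)\sum_A(|\alpha_A|-1)$ before you loosen it to $|G|(L-1)$.
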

\begin{proof}
Consider an occurrence of a length-$L$ factor in the derivation tree for $S$, with leaves interpreted as above. Descend into a child containing the whole occurrence as long as this is possible. This process cannot end at a terminal leaf, whose interpretation has length at most one. At the final node, the occurrence crosses a boundary between consecutive children, with at least one letter on each side. Choose one such boundary and let $k\in\{1,\ldots,L-1\}$ be the number of letters of the occurrence before it. The production, the chosen boundary, and $k$ uniquely determine the factor: it is the length-$k$ suffix before the boundary followed by the length-$(L-k)$ prefix after it. This remains true when some children have empty interpretations. A right-hand side of length $d$ has $d-1$ boundaries. Thus the number of distinct factors is at most
\[
 (L-1)\sum_A (|\alpha_A|-1)\leq (L-1)|G|.\qedhere
\]
\end{proof}

\section{The construction}\label{sec:construction}

The input has two parts. The \emph{targets} are unary runs with carefully chosen lengths. The \emph{auxiliaries} force \Greedy{} to read these lengths in base $64$, one digit at a time. Distinct separators prevent substitutions from crossing between these words.

\subsection{A power-free auxiliary word}\label{sec:auxiliary}

We use the following $19$-uniform morphism on $\{0,1,2\}$, due to Dejean~\cite{dejean}:
\begin{equation}\label{eq:morphism}
 \begin{aligned}
 \mu(0)&=0120212012102120210,\\
 \mu(1)&=1201020120210201021,\\
 \mu(2)&=2012101201021012102.
 \end{aligned}
\end{equation}
A word has period $p\geq 1$ if letters at distance $p$ agree whenever both exist. It is \emph{$(7/4)^+$-free} if none of its factors of length $\ell$ and period $p$ satisfies $4\ell>7p$.

\begin{lemma}\label{lem:morphism}
Every word $\mu^e(0)$, $e\geq 0$, is $(7/4)^+$-free. In each word $\mu^2(c)$, the number of occurrences of any fixed factor of length $1$, $2$, or $3$ is at most $121$, $61$, or $41$, respectively.
\end{lemma}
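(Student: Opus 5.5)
The lemma has two parts, and I would treat them separately. The first part is a power-freeness property of the whole infinite family $\mu^e(0)$. The second part is a finite count on three explicit words of length $361$.

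\textbf{Counting claims.} These are pure finite computations. I would write out $\mu^2(0)$, $\mu^2(1)$ and $\mu^2(2)$, each of length $19^2=361$, by applying~\eqref{eq:morphism} to each letter of $\mu(c)$. For every $c$ and every word $u$ over $\{0,1,2\}$ with $|u|\le 3$ (at most $3+9+27$ words), I would count the occurrences of $u$ and compare with $121$, $61$, $41$.

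A structural sanity check explains the shape of these numbers. Each $\mu(a)$ contains each letter either $6$ or $7$ times. Each letter occurs in $\mu(c)$ about $6$ or $7$ times, so letter counts in $\mu^2(c)$ come to about $6\cdot 6+6\cdot 7+7\cdot 6$ or so. Similarly, length-$2$ and length-$3$ factors are spread roughly uniformly over the $6$ possible squarefree pairs and the $12$ or so possible triples, since $(7/4)^+$-freeness already forbids $aa$. I would not rely on this heuristic. The stated bounds are settled by direct enumeration, which matches the Lean verification.

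\textbf{Power-freeness.} I would invoke Dejean's theorem~\cite{dejean}: $\mu$ is a $(7/4)^+$-free morphism, i.e.\ it maps $(7/4)^+$-free words to $(7/4)^+$-free words. Since $\mu^0(0)=0$ is trivially free, induction on $e$ gives the claim. To keep the argument self-contained, I would reprove the morphism property with the standard finite criterion for uniform morphisms, applied in three steps.

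\begin{enumerate}
\item Check that $\mu$ is synchronizing: $\mu(ab)$ contains $\mu(c)$ only at positions $0$ and $19$. This is a finite check on nine words of length $38$.
\item Suppose $\mu(x)$ contains a factor with period $p$ and length $\ell>7p/4$. If $p$ is large, that is $p\ge 19$, synchronization forces $19\mid p$. Desubstituting then yields a factor of $x$ with period $p/19$ and length at least roughly $(\ell-36)/19$.
\item To conclude that this factor of $x$ is itself a $(7/4)^+$-repetition, use the exact-length accounting. The boundary images must agree on the parts that overlap. Here the left-cancellation and right-cancellation properties of $\mu$ (distinct first letters and distinct last letters of the three images) recover the lost letters.
\end{enumerate}

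For small periods, where $\ell$ is bounded by a constant, any offending factor lies inside $\mu(y)$ for some short factor $y$ of $x$. A finite check over all $(7/4)^+$-free $y$ of bounded length, say $|y|\le 4$, handles these cases.

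The main obstacle is this cancellation bookkeeping in the large-period case. It is where the exact threshold $7/4$, and not merely some threshold, is preserved. I would simply cite Dejean's proof for that step and only verify the small cases and synchronization explicitly.
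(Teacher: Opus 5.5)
\paragraph{The counting part.} Your plan for the short-factor counts is fine and is the paper's in substance: both are finite enumerations. The paper only organizes it through $N_v(\mu^2(0))=6N_v(\mu(0))+6N_v(\mu(1))+7N_v(\mu(2))+E_v$ and the cyclic symmetry of $\mu$, so it scans three words of length $19$ plus boundary terms instead of three words of length $361$.

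\paragraph{Power-freeness.} Citing Dejean does settle the first assertion: her word is the fixed point of $\mu$, and each $\mu^e(0)$ is a prefix of it because $\mu(0)$ begins with $0$. But that citation is then your whole argument. The self-contained reproof you announce is not carried out, and your sketch of it has a concrete flaw. The synchronization you check (whole images $\mu(c)$ occur in $\mu(ab)$ only at offsets $0$ and $19$) forces $19\mid p$ only when the equal factors $\mu(x)[a:b-p)$ and $\mu(x)[a+p:b)$ contain a complete block. That is guaranteed only for $\ell-p\geq 37$. The hypothesis $p\geq 19$ gives merely $\ell-p>3p/4$; for example $p=19$, $\ell=34$ gives $\ell-p=15$. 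So your split into a finite check for $|y|\le 4$ and synchronization for $p\ge 19$ leaves cases uncovered. The paper uses a sharper synchronization: in $\mu(a)\mu(b)$ with $a\ne b$, the length-$7$ prefixes of the images occur only at offset $0$ and the length-$7$ suffixes only at offset $12$. This synchronizes any two equal factors of length $18$.

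\paragraph{The deferred step.} The step you call the main obstacle and leave to Dejean is where the proof actually happens. The missing idea is a normalization, not delicate bookkeeping.
\begin{enumerate}
\item Let $x$ be a shortest $(7/4)^+$-free word whose image has a bad factor $\mu(x)[a:b)$ of length $\ell$ and period $p$.
\item A direct check of the images of the seven representatives of length at most $3$, for $p\le 32$, gives $k=|x|\ge 4$.
\item Minimality makes the factor meet both end blocks, so $a\le 18$, $b\ge 19(k-1)+1$, $\ell\le 19k$ and $\ell\ge 40$.
\item Hence $\ell-p>3\ell/7>17$, and the synchronization gives $p=19d$. From $7\cdot 19d=7p<4\ell\le 76k$ we get $7d<4k$, and in particular $d\le k-2$.
\item Each $\mu(c)$ begins and ends with $c$. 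Compare the last letters of blocks $0$ and $d$, and the first letters of blocks $i$ and $i+d$ for $1\le i\le k-d-1$. This shows that the \emph{whole} of $x$ has period $d$.
\item Since $4k>7d$, this contradicts the $(7/4)^+$-freeness of $x$.
\end{enumerate}
Nothing is lost at the ends: the partial first and last blocks are recovered through their last and first letters respectively. Your estimate $(\ell-36)/19$ comes only from discarding them.
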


A self-contained proof, including the finite calculations for this particular morphism, is given in \cref{app:morphism}. The consequence needed in the main argument is a uniform bound on competing substitutions in a padded auxiliary word.

\begin{lemma}\label{lem:aux-gain}
Let $e\geq 2$, $V=\mu^e(0)$, and $M=19^e$. Replace every letter $c$ of $V$ by $X^L c$, where $L\geq 1$ and $X\notin\{0,1,2\}$, obtaining a word $F$. If an eligible factor $v$ of $F$ contains a letter other than $X$, then its gain, computed using its occurrences in $F$, is at most
\[
 \Delta_F(v)\leq \frac{15}{16}M(L+1).
\]
\end{lemma}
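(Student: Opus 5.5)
The plan is to reduce everything to counting occurrences of factors of $V$. We then split by the number $k\geq 1$ of non-$X$ letters in $v$.

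\emph{Setup.} In $F$ the letters of $\{0,1,2\}$ sit exactly at positions $j(L+1)$, $1\le j\le M$, and all other positions hold $X$. Hence $v$ has the form $X^a c_1X^L c_2\cdots X^L c_k X^b$ with $0\le a\le L$ and $0\le b<L$. The letters at those positions in $F$ are $c_1,\ldots,c_k$, and the $X$-block between consecutive ones has length exactly $L$. We record two facts about $v$.
\begin{itemize}
\item Its length satisfies $|v|\le k(L+1)+L<(k+1)(L+1)$.
\item Every occurrence of $v$ in $F$ is determined by an occurrence of $u=c_1\cdots c_k$ in $V$. Two occurrences of $v$ whose $u$-occurrences start $p$ letters apart start $p(L+1)$ positions apart in $F$.
\end{itemize}
Let $c=c_F(v)$, and fix a nonoverlapping family of $c$ occurrences with letter start positions $i_1<\cdots<i_c$ in $V$. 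By \eqref{eq:gain},
\[
 \Delta_F(v)\le (c-1)|v|<(c-1)(k+1)(L+1),
\]
so it suffices to bound $c-1$.

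\emph{Case $k\geq 4$ (power-freeness).} Consider consecutive starts $p=i_{j+1}-i_j$. If $p<k$, the two occurrences of $u$ overlap. Their union is then a factor of $V$ of length $p+k$ with period $p$. Lemma~\ref{lem:morphism} says $V$ is $(7/4)^+$-free, which forces $4(p+k)\le 7p$, i.e.\ $p\ge 4k/3$, contradicting $p<k$. So every gap satisfies $p\ge k$, and applying the same inequality in the overlapping range $k\le p<4k/3$ shows that in fact $p\ge 4k/3$ in all cases. (The exact bookkeeping of which overlaps actually yield a periodic factor is the step I would check carefully.) Since $i_c\le M-k+1$, we get $(c-1)\cdot 4k/3\le M$, so $c-1\le 3M/(4k)$. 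Therefore
\[
 \Delta_F(v)\le \frac{3M}{4k}(k+1)(L+1)=\frac34\Bigl(1+\frac1k\Bigr)M(L+1)\le\frac{15}{16}M(L+1),
\]
with equality of the constants exactly at $k=4$. This case is the main obstacle: the constant $15/16$ is tight there, so every estimate must be sharp, and in particular $|v|<(k+1)(L+1)$ must be used rather than a cruder bound.

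\emph{Case $k\le 3$ (counting).} Since $e\ge 2$, we can write $V=\mu^{e-2}(\mu^2(0))$ as a concatenation of $M/361$ blocks $\mu^2(c')$, each of length $361$. An occurrence of $u$ either lies inside one block or crosses one of the $M/361-1$ block boundaries, with at most $k-1$ occurrences per boundary. By Lemma~\ref{lem:morphism} the total number of occurrences of $u$ is therefore at most $(121+0)M/361$, $(61+1)M/361$, or $(41+2)M/361$ for $k=1,2,3$. Multiplying by $(k+1)(L+1)$ gives the bounds
\[
 \frac{242}{361}M(L+1),\qquad \frac{186}{361}M(L+1),\qquad \frac{172}{361}M(L+1),
\]
all below $\frac{15}{16}M(L+1)$. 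Combining the two cases proves the lemma.
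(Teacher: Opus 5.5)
Your proof is correct and follows the paper's argument essentially step for step: the same split by the number of non-$X$ letters, the $(7/4)^+$-freeness spacing $p\geq 4k/3$ combined with $|v|<(k+1)(L+1)$ for $k\geq 4$, and the same $\mu^2$-block count giving the numerators $242$, $186$, $172$ for $k\leq 3$. The step you flagged needs no case distinction on overlap: for every distance $p\geq 1$, the factor spanning both occurrences of $u$ has length $p+k$ and period $p$, exactly as the paper uses it. One small slip is that $b$ can equal $L$, but your stated bound $|v|\leq k(L+1)+L$ already allows for this.
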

\begin{proof}
First, equal length-$m$ factors in $V$ have starting positions at distance at least $4m/3$. If their distance is $d$, the factor spanning both occurrences has length $d+m$ and period $d$. By \cref{lem:morphism}, $4(d+m)\leq 7d$. Consequently, $c$ occurrences of a fixed length-$m$ factor satisfy
\begin{equation}\label{eq:spacing}
 M\geq m+(c-1)\frac{4m}{3}.
\end{equation}

Now let $m\geq 1$ be the number of non-$X$ letters in $v$, and let $c\geq 2$ be its nonoverlapping occurrence count in $F$. Deleting the $X$'s from each occurrence identifies a distinct occurrence of one fixed length-$m$ factor in $V$: the first non-$X$ letter has a fixed offset in $v$ and determines its starting position. Moreover,
\begin{equation}\label{eq:padded-length}
 |v|\leq (m+1)L+m<(m+1)(L+1).
\end{equation}
For $m\geq 4$, \cref{eq:gain,eq:spacing,eq:padded-length} give
\[
 \Delta_F(v)\leq (c-1)(m+1)(L+1)
 \leq \frac34(M-m)\frac{m+1}{m}(L+1)
 \leq \frac{15}{16}M(L+1).
\]
For $m\in\{1,2,3\}$, partition $V$ into its $M/361$ consecutive $\mu^2$-blocks. At most $m-1$ length-$m$ occurrences cross each boundary. The second assertion of \cref{lem:morphism} bounds the internal occurrences in every block by $a_m$, where $(a_1,a_2,a_3)=(121,61,41)$. Thus
\[
 \Delta_F(v)\leq c|v|
 \leq M(L+1)\frac{(m+1)(a_m+m-1)}{361}.
\]
The three numerators are $242$, $186$, and $172$, all less than $361\cdot 15/16$.
\end{proof}

\subsection{Targets, auxiliaries, and separators}\label{sec:input}

Fix $h=2^r$, $r\geq 10$, and set
\[
 B=64,\qquad q=2r,\qquad R=h^2,\qquad H=h^2B^{2h}.
\]
Let $D$ be a binary cyclic de Bruijn word of order $q$, written linearly with length $2^q=h^2$. Thus every binary word of length $q$ occurs at exactly one cyclic starting position, and a length-$q$ factor contained in the linear word $D$ occurs at most once. For definiteness, choose the lexicographically least linear representative among all such words.

Recall briefly why such a word exists. The directed graph with vertices $\{0,1\}^{q-1}$ and edges $\{0,1\}^q$, directed from prefix to suffix of length $q-1$, is strongly connected and has indegree and outdegree two at every vertex. Following unused edges produces a closed walk, since balance prevents stopping at a different vertex. If edges remain, strong connectivity gives a visited vertex with an unused outgoing edge. Splice a new closed walk from that vertex into the old one and repeat. The resulting walk uses every edge once; reading the last letters of its edges cyclically gives the required word.

Apply the code $\varphi(0)=01$, $\varphi(1)=10$ and write
\[
 W=\varphi(D)=C_1C_2\cdots C_h,\qquad |C_s|=2h.
\]
Positions within a chunk are numbered from $1$. For $1\leq s\leq h$ and $h+1\leq i\leq 2h$, define
\begin{equation}\label{eq:target-integers}
 n_{s,i}=\sum_{p=1}^{i} C_s[p]B^{i-p}.
\end{equation}
There are $R=h^2$ such integers. All are positive, since the first two letters of each chunk are $01$ or $10$, and each is less than $B^{2h}$. Their sum is therefore at most $H$.

For $1\leq j\leq h$, put
\begin{equation}\label{eq:aux-parameters}
 e_j=7h+2-3j,\qquad M_j=19^{e_j},\qquad V_j=\mu^{e_j}(0).
\end{equation}
Choose three private terminal letters $b_{j,0},b_{j,1},b_{j,2}$ for each $j$, with all these letters distinct and different from $a$. Let $F_j$ be the image of $V_j$ under
\begin{equation}\label{eq:aux-input}
 c\longmapsto a^{B^j}b_{j,c}.
\end{equation}
The word $w_h$ consists of the targets $a^{n_{s,i}}$, in lexicographic order of $(s,i)$, followed by $F_1,\ldots,F_h$, with a fresh terminal separator between every two successive words. All $h^2+h-1$ separators are distinct and different from the other terminals.

A separator occurs only once in the initial grammar, so it belongs to no repeated factor. Inductively, it remains only in the start rule, and no substitution crosses it. We may therefore speak of the target and auxiliary \emph{fragments} of the start right-hand side. These fragments are not additional rules.

Two scale inequalities will be useful:
\begin{equation}\label{eq:scales}
 \frac{M_j}{M_{j+1}}=19^3=6859>B^2,
 \qquad M_h>H+B+100.
\end{equation}
Indeed, $h\leq 2^h$ gives $H\leq 2^{14h}$, whereas
\[
 M_h=19^{4h+2}>2^{16h+8}\geq 1024\cdot 2^{14h}\geq 1024H.
\]
Since $H\geq 4096$, the second inequality follows.

\subsection{A small comparison grammar}\label{sec:comparison}

\begin{proposition}\label{prop:comparison}
The word $w_h$ satisfies
\[
 h^2+4h\leq g(w_h)\leq 600h^2,
 \qquad 2^h\leq |w_h|\leq 2^{50h}.
\]
\end{proposition}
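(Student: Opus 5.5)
We prove the four inequalities separately; the upper bound on $g(w_h)$ needs an explicit grammar, the rest is counting.

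\emph{Lower bound on $g(w_h)$.} The alphabet has $h^2+4h$ letters other than separators: the letter $a$ and $3h$ letters $b_{j,c}$ give $3h+1$, and the $h^2+h-1$ separators bring the total to $h^2+4h$. Every terminal occurring in $w_h$ must occur in some right-hand side of any SLP generating $w_h$. Hence $g(w_h)\geq h^2+4h$.

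\emph{Length bounds.} For the lower bound it suffices to look at one target. The integer $n_{s,2h}$ has $2h$ base-$B$ digits with a nonzero digit among the first two positions, so $n_{s,2h}\geq B^{2h-2}\geq 2^h$. For the upper bound, the targets contribute at most $H\leq 2^{14h}$. The auxiliary $F_j$ has length $M_j(B^j+1)\leq 19^{7h}\cdot 2\cdot 64^h$. Summing over $j$ and adding $h^2+h$ separators stays below $2^{50h}$, since $19^7<2^{30}$ and $64=2^6$.

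\emph{Upper bound on $g(w_h)$.} We build one shared grammar.
\begin{enumerate}
\item Build $A_k$ with $\val(A_k)=a^{B^k}$ for $0\leq k\leq 2h$ via $A_{k}\to A_{k-1}^{64}$. This costs $64\cdot 2h=128h$.
\item Each target is $a^{n_{s,i}}$, and $n_{s,i}$ has binary digits, so it is $\prod_p A_{i-p}^{C_s[p]}$, a product of at most $2h$ symbols. That is too large, since $h^2$ targets would cost $2h^3$. Instead, exploit prefix sharing within a chunk: $n_{s,i+1}=Bn_{s,i}+C_s[i+1]$. Define $T_{s,i}$ for $i\leq 2h$ by $T_{s,i}\to T_{s,i-1}^{64}a^{C_s[i]}$. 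Each rule has size at most $65$, there are $2h$ of them per chunk, and there are $h$ chunks, giving roughly $130h^2$.
\item For each $j$, build $\mu^{e}(c)$ for all $e\leq e_j$ as the image of letters $c\mapsto A_j b_{j,c}$. Use symbols $P_{j,e,c}$ with $P_{j,0,c}\to A_j b_{j,c}$ and $P_{j,e,c}\to$ the $19$ symbols $P_{j,e-1,d}$ following $\mu(c)$. The cost per $j$ is $3(2+19e_j)\leq 3\cdot 19\cdot 7h+6$, so over all $j$ it is about $400h^2$.
\item Finally, the start rule lists $h^2+h$ symbols and $h^2+h-1$ separators, about $2h^2+2h$.
\end{enumerate}
The total is about $532h^2$ plus $O(h)$ terms, which stays below $600h^2$ since $h\geq 1024$ absorbs the linear terms.

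The main obstacle is the bookkeeping in the upper bound: keeping the constants in the auxiliary hierarchy (using $e_j\leq 7h-1$) and in the target rules under $600h^2$. The key idea is the Horner-style sharing across $i$ within a chunk, without which the targets alone cost order $h^3$.
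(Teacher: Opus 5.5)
Your proposal is correct and follows essentially the same route as the paper. It uses Horner-style prefix sharing for the targets, a power chain plus a three-symbol-per-level hierarchy realizing $\mu^{e}$ for the auxiliaries, the alphabet count for the lower bound on $g$, and direct size estimates for $|w_h|$; your witness $n_{s,2h}\geq B^{2h-2}$ in place of $M_h$, and your uniform bound $e_j\leq 7h-1$ in place of the exact sum $\sum_j e_j=(11h^2+h)/2$, are harmless variations. You should add two small details that the paper states. The Horner chain must start at the first positive prefix, represented by $a$ itself, since a rule whose right-hand side uses a symbol for the empty word is not a valid SLP rule. The count $h^2+4h$ also needs every marker $b_{j,c}$ to actually occur in $w_h$, which holds because each $\mu(c)$ contains all three letters.
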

\begin{proof}
We first construct an SLP of size at most $600h^2$. Within each chunk, successive base-$B$ prefixes satisfy $n'=Bn+b$, where $b\in\{0,1\}$. A nonterminal for a positive prefix can therefore be defined using $B$ copies of the preceding prefix nonterminal and, if $b=1$, one copy of $a$. Before the first $1$ there is no rule; the first positive prefix is represented by $a$ itself. At most $2h$ rules of length at most $B+1=65$ suffice per chunk. All target words are thus available at total cost at most $130h^2$.

Independently, define $P_0=a$ and $P_j\to P_{j-1}^B$ for $1\leq j\leq h$, at cost $64h$. For each $j$ and $c\in\{0,1,2\}$, define
\[
 Q_{j,0,c}\to P_j b_{j,c}.
\]
If $\mu(c)=c_1\cdots c_{19}$, define, for $1\leq u\leq e_j$,
\[
 Q_{j,u,c}\to Q_{j,u-1,c_1}\cdots Q_{j,u-1,c_{19}}.
\]
The symbol $Q_{j,e_j,0}$ generates $F_j$. These rules cost $6h+57\sum_j e_j$, where
\[
 \sum_{j=1}^h e_j=\frac{11h^2+h}{2}.
\]
The start rule lists the $h^2+h$ target and auxiliary symbols, interspersed with their separators, and has length $2h^2+2h-1$. All dependencies in this construction are acyclic. Adding the costs gives
\[
 g(w_h)\leq \frac{891h^2+201h-2}{2}
 \leq 546h^2\leq 600h^2.
\]

Every private auxiliary letter occurs, since each image under $\mu$ contains all three letters. Including $a$ and the separators, the alphabet size is
\[
 1+3h+(h^2+h-1)=h^2+4h.
\]
Each terminal appearing in a generated word must occur literally in some production, proving the lower bound on $g(w_h)$.

Finally, $|w_h|\geq M_h>2^{16h+8}\geq 2^h$. For the upper bound, the total target length is at most $H\leq 2^{14h}$, and the total auxiliary length is at most
\[
 \sum_{j=1}^h M_j(B^j+1)
 \leq 2h\,19^{7h}64^h
 \leq 2^{42h+1}.
\]
There are at most $2h^2\leq 2^{2h+1}$ separators. Each of the three contributions is at most $2^{42h+1}$, so
\[
 |w_h|\leq 3\cdot 2^{42h+1}\leq 2^{42h+3}\leq 2^{50h}.\qedhere
\]
\end{proof}

\section{The substitutions forced by the auxiliaries}\label{sec:phases}

Write $A_0=a$. We show that every execution eventually creates nonterminals $A_1,\ldots,A_h$, in this order, by the substitutions
\begin{equation}\label{eq:phases}
 A_t\to A_{t-1}^{B},\qquad 1\leq t\leq h.
\end{equation}
Other substitutions may occur between them. Equation~\eqref{eq:phases} describes a rule when it is created; later global substitutions may change its right-hand side. Each $A_t$ always expands to $a^{B^t}$, but in this section it is a single literal symbol, called a \emph{primary symbol}.

\subsection{An invariant for the whole grammar}\label{sec:invariant}

Immediately before phase $t$, let $X=A_{t-1}$. We maintain the following invariant, which accounts for every fragment of the start rule and every other right-hand side.
\begin{enumerate}[label=(\roman*)]
\item\label{inv:targets}
The fragment for a target integer $n$ is
\begin{equation}\label{eq:target-invariant}
 X^{\lfloor n/B^{t-1}\rfloor}
 A_{t-2}^{d_{t-2}}\cdots A_0^{d_0},
 \qquad d_u\in\{0,1\},
\end{equation}
where $d_u$ are the lower base-$B$ digits of $n$. The lower sequence is empty when $t=1$. There are no additional target rules, and the total length of the target fragments is at most $H$.
\item\label{inv:unfinished}
For $j\geq t$, auxiliary $j$ is unfinished: its fragment is the image of $V_j$ under
\[
 c\longmapsto X^{L_j}b_{j,c},\qquad L_j=B^{j-t+1}.
\]
Its markers appear nowhere else, and it has no additional rules.
\item\label{inv:finished}
For $j<t$, the fragment of finished auxiliary $j$ and its additional rules use only $A_j$, its three markers, and nonterminals private to that auxiliary. None of these fragments or right-hand sides contains $A_jA_j$.
\item\label{inv:power}
For each $0\leq u\leq t-2$, the defining rule for $A_{u+1}$ and its additional rules form a power layer. Every right-hand side in this layer uses only $A_u$ and nonterminals private to this layer.
\end{enumerate}
Here each nonprimary, nonstart nonterminal is assigned to exactly one finished auxiliary or power layer. Its defining rule and all its occurrences belong to that class. The private alphabets of different classes are disjoint. A primary symbol may occur in several classes, but its defining rule belongs to the specified power layer. In particular, no power-layer right-hand side contains the active symbol $X$. Separators occur only between the start fragments.

The invariant holds initially for $t=1$: there are only the original target and auxiliary fragments. We prove that every chosen factor other than $X^B$ is confined to one already processed class and preserves the invariant. The factor $X^B$ advances it to phase $t+1$.

\subsection{All factors affecting an unprocessed fragment lose}\label{sec:competitors}

The factor $X^B$ has $M_t$ nonoverlapping occurrences in auxiliary $t$ alone. It is eligible, and its global gain $D_B$ satisfies
\begin{equation}\label{eq:benchmark}
 D_B\geq M_t(B-1)-B>H.
\end{equation}
We compare it with every other eligible factor that occurs in a target or an unfinished auxiliary.

\proofheading{Active unary factors}
For $d\geq 2$, occurrences of $X^d$ lie only in the target prefixes and unfinished auxiliaries. Power layers contain no $X$; finished auxiliaries with index below $t-1$ use a different primary symbol; and finished auxiliary $t-1$ has no adjacent $X$'s.

Suppose first that $2\leq d<B$, and write $B=ud+v$ with $0\leq v<d$. In each length-$B$ run of auxiliary $t$, the replacement benefit before paying for the new rule differs by
\[
 (B-1)-u(d-1)=u+v-1\geq 1
\]
in favor of $X^B$. Indeed, either $u\geq 2$, or $u=1$ and $v=B-d\geq 1$. In any other unfinished run, its length $L$ is divisible by $B$, and
\[
 \lfloor L/d\rfloor(d-1)\leq L(1-1/d)<L(1-1/B).
\]
Such a run also favors $X^B$. The targets can favor $X^d$ by at most their total length $H$, and the difference in new-rule costs is at worst $-B$. Hence
\begin{equation}\label{eq:short-unary}
 D_B-\Delta_G(X^d)\geq M_t-H-B>0.
\end{equation}

If $d>B$, auxiliary $t$ has no occurrence of $X^d$. By \eqref{eq:scales}, the total number of active symbols in the later auxiliaries is
\begin{equation}\label{eq:later-mass}
 \sum_{j=t+1}^h M_j B^{j-t+1}
 \leq M_t B\sum_{s=1}^{h-t}B^{-s}
 <\frac{B}{B-1}M_t<2M_t.
\end{equation}
The benefit in these auxiliaries and the targets is therefore less than $2M_t+H$. On the other hand,
\[
 D_B-(2M_t+H)\geq M_t(B-3)-B-H>0.
\]
This excludes every active unary factor except $X^B$.

\proofheading{Factors containing an unfinished auxiliary's marker}
Every occurrence of such a factor lies in the same auxiliary $j\geq t$, because that marker occurs nowhere else. By \cref{lem:aux-gain}, its gain is at most $\frac{15}{16}M_j(L_j+1)$. The occurrences of $X^B$ in this auxiliary alone give gain at least $M_jL_j(1-1/B)-B$. Thus
\begin{align*}
 D_B-\Delta_G(v)
 &\geq M_j\left(\frac{3L_j}{64}-\frac{15}{16}\right)-64\\
 &\geq \frac{33}{16}M_j-64>0,
\end{align*}
since $L_j\geq B=64$ and $M_j\geq M_h>H+B+100$.

\proofheading{Other factors in targets}
Every target consists of primary symbols. A power of a lower primary symbol cannot occur: after the active prefix, the primary indices in~\eqref{eq:target-invariant} decrease strictly. Any remaining factor of length at least two therefore contains two distinct primary symbols. Only target fragments can contain such a factor, since each other class uses at most one primary symbol. Its selected occurrences cover at most $H$ symbols, so its gain is at most $H<D_B$.

These comparisons are strict. Consequently, every chosen factor other than $X^B$ avoids all targets and all unfinished auxiliaries.

\subsection{Intervening substitutions and progress}\label{sec:cleanup}

Consider a chosen factor $v\ne X^B$. It has no separator and avoids all unprocessed fragments. If it contains a private nonterminal or marker, that symbol identifies one finished auxiliary or power layer, and all occurrences of $v$ lie in that class. If it contains only primary symbols, it cannot contain two distinct ones, since only targets could contain such a factor. It is therefore a power $A_u^d$, $d\geq 2$. The active case was excluded above. The finished auxiliary using $A_u$ contains no adjacent $A_u$'s, so all occurrences lie in power layer $u$.

In either case, assign the fresh nonterminal to this one class. All replacements and the new rule stay in the class. Its permitted alphabet is preserved. In a finished auxiliary, replacing a nonempty factor by a fresh symbol different from $A_j$ creates no new adjacent pair $A_jA_j$, and the new right-hand side is an old factor, so it too contains no such pair. Thus every part of the invariant is preserved.

As long as phase $t$ has not occurred, $X^B$ remains eligible with positive gain~\eqref{eq:benchmark}. The execution cannot terminate, and it cannot perform infinitely many intervening substitutions, by the termination bound in \cref{sec:preliminaries}. It must therefore eventually choose $X^B$.

At that step, each unfinished auxiliary run has length divisible by $B$, so left-to-right replacement changes it exactly to $A_t^{L_j/B}$. Auxiliary $t$ becomes an alternating word of $A_t$'s and private markers and hence satisfies the finished-auxiliary condition. For a target prefix, write
\[
 \left\lfloor\frac{n}{B^{t-1}}\right\rfloor
 =B\left\lfloor\frac{n}{B^t}\right\rfloor+d_{t-1},
 \qquad d_{t-1}\in\{0,1\}.
\]
Left-to-right replacement gives the new prefix $A_t^{\lfloor n/B^t\rfloor}A_{t-1}^{d_{t-1}}$, followed by the unchanged lower digits. The new rule $A_t\to A_{t-1}^B$ begins power layer $t-1$. No other class contains $X^B$. This proves the invariant for the next phase.

The preceding induction proves the following statement.

\begin{proposition}\label{prop:forced}
Every execution of \Greedy{} on $w_h$ performs all substitutions in~\eqref{eq:phases}, in order. Immediately after phase $h$, the target fragment for $(s,i)$ is
\begin{equation}\label{eq:target-after}
 A_h^{\lfloor n_{s,i}/B^h\rfloor}U_{s,i},
 \qquad
 U_{s,i}=A_{h-1}^{C_s[i-h+1]}A_{h-2}^{C_s[i-h+2]}
 \cdots A_0^{C_s[i]}.
\end{equation}
\end{proposition}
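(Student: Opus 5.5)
The proof is an induction on the phase index $t$, with the invariant of \cref{sec:invariant} as the induction hypothesis. The base case $t=1$ holds by construction. The invariant~\ref{inv:targets} is immediate, since $X=A_0=a$ and each target fragment is $a^{n}$ with an empty lower sequence. For~\ref{inv:unfinished}, each $F_j$ is exactly the image of $V_j$ under $c\mapsto a^{B^j}b_{j,c}$ with $L_j=B^{j}$. Conditions~\ref{inv:finished} and~\ref{inv:power} are vacuous.

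For the inductive step, I fix $t\le h$ and assume the invariant holds immediately before phase $t$. I then consider the sequence of rounds up to the first choice of $X^B$.

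First, \cref{sec:competitors} shows that every eligible factor meeting a target or an unfinished auxiliary, other than $X^B$, has gain strictly below $D_B$. This covers three kinds of factors: active unary factors, factors containing an unfinished marker (via \cref{lem:aux-gain} and~\eqref{eq:scales}), and mixed primary factors in targets. Since \Greedy{} maximizes~\eqref{eq:gain} over all eligible factors, every chosen factor other than $X^B$ is confined to a processed class. By the argument of \cref{sec:cleanup}, each such round preserves the invariant. Because $X^B$ stays eligible with positive gain and executions terminate, $X^B$ is eventually chosen. Left-to-right replacement then produces the phase-$(t+1)$ invariant, using the digit identity $\lfloor n/B^{t-1}\rfloor=B\lfloor n/B^{t}\rfloor+d_{t-1}$. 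This step creates $A_t\to A_{t-1}^B$ as the $t$-th forced substitution, and the phases occur in order because $A_{t+1}$ can only arise once $A_t$ is the active symbol.

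The step that needs the most care is the handling of the final phase together with the digit bookkeeping. After phase $h$ the invariant still holds in the form of~\ref{inv:targets} with $t=h+1$. I would check this carefully: the only quantities involving $t$ in the arguments are $B^{t-1}$ and $M_t$, and they are used only while $t\le h$. The conclusion is that the fragment for $n=n_{s,i}$ is
\[
A_h^{\lfloor n/B^h\rfloor}A_{h-1}^{d_{h-1}}\cdots A_0^{d_0},
\]
where $d_u$ is the $u$-th base-$B$ digit of $n$.

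It remains to identify these digits with the chunk letters. Equation~\eqref{eq:target-integers} gives $d_u=C_s[i-u]$ for $0\le u\le i-1$. Since every $C_s[p]\in\{0,1\}<B$, this is a genuine base-$B$ expansion. The exponent of $A_{h-k}$ is therefore $C_s[i-h+k]$ for $1\le k\le h$, which is exactly $U_{s,i}$ in~\eqref{eq:target-after}. The hypothesis $i\ge h+1$ ensures that all indices $i-h+k$ lie in $[1,2h]$.
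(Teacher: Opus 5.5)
Your proposal is correct and follows the paper's own argument. The proposition is exactly the outcome of the phase induction on the invariant of \cref{sec:invariant}, using the competitor bounds of \cref{sec:competitors} and the cleanup and progress argument of \cref{sec:cleanup}; your explicit identification $d_u=C_s[i-u]$ of the base-$B$ digits, which the paper leaves implicit, correctly gives the exponents in $U_{s,i}$.
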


\section{A lower bound on every final grammar}\label{sec:output}

The words $U_{s,i}$ contain many distinct factors. Later substitutions may change their literal representation, including the definitions of primary symbols. We first show that this does not invalidate a factor-count argument.

\subsection{Keeping the primary symbols as letters}

\begin{lemma}\label{lem:freeze}
Let $G_h$ be any terminal grammar reached after phase $h$. There is a word $z$ containing every $U_{s,i}$ as a factor such that, for every $L\geq 2$,
\[
 |\Fac_L(z)|\leq |G_h|(L-1).
\]
\end{lemma}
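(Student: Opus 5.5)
We will apply \cref{lem:capacity} to $G_h$ with a suitably chosen interpretation of its symbols over an alphabet $\Gamma$ that keeps the primary symbols $A_0,\ldots,A_{h-1}$ as single letters. The difficulty is that, after phase $h$, further global substitutions may rewrite the right-hand sides containing the $A_u$. In particular, they may replace factors of the start rule that overlap the words $U_{s,i}$ by fresh nonterminals. So the literal symbols in $G_h$ no longer spell the $U_{s,i}$. We therefore define the interpretation not on literal symbols of $G_h$ but through the history of the execution.

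\textbf{Step 1: the interpretation at the end of phase $h$.} Let $G^{(h)}$ be the grammar immediately after phase $h$. Take $\Gamma=\{A_0,\ldots,A_{h-1}\}$. Interpret each $A_u$ with $u\leq h-1$ as the letter $A_u$, and every other symbol as follows. The symbol $A_h$, all terminals, all markers and all separators are interpreted as the empty word. Private nonterminals of a finished auxiliary or power layer are interpreted by evaluating their current rules under this assignment. A power layer $u$ uses only $A_u$ and its private symbols, so private symbols of power layer $u$ become powers of the letter $A_u$.

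The defining rules of the primary symbols themselves must be handled. The rule $A_{u+1}\to A_u^B$ (possibly rewritten) would force $A_{u+1}$ to be a word of length $B$, not a single letter. To avoid this, we declare the primary symbols $A_0,\ldots,A_{h-1}$ to be terminal-like leaves for the purpose of \cref{lem:capacity}. Concretely, we build an auxiliary SLP $G'$ from $G_h$ by deleting the rules of all primary symbols $A_1,\ldots,A_h$ and treating them as terminals. Then $|G'|\leq|G_h|$, and \cref{lem:capacity} allows each terminal of $G'$ to be one letter or empty. We map $A_u\mapsto A_u$ for $u<h$ and all genuine terminals and $A_h$ to the empty word. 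The remaining rule equations are then satisfied automatically, because we simply define every non-deleted nonterminal by evaluation, which is well defined since $G'$ is acyclic. Let $z$ be the interpretation of $S$. \cref{lem:capacity} then gives $|\Fac_L(z)|\leq|G'|(L-1)\leq|G_h|(L-1)$.

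\textbf{Step 2: $z$ contains every $U_{s,i}$.} This is the main point and needs an invariant along the execution after phase $h$. Define the ``primary-frozen value'' of a symbol, at any moment, as its expansion in which each primary symbol is kept as a letter, rather than expanded down to $a$. We claim that a later substitution never changes the primary-frozen value of the start symbol, apart from the erasing map. A substitution replaces a factor $v$ by a fresh $Y$ with $Y\to v$. The frozen value of $Y$ is the frozen value of $v$, and no rule of a primary symbol is used in computing frozen values. Hence the frozen value of every old symbol, including $S$, is unchanged. This is the same argument as in \cref{sec:preliminaries}, with primary symbols treated as constants. Rewriting inside a primary symbol's own rule only affects a deleted rule, so it is irrelevant.

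At the end of phase $h$, the frozen value of the start rule contains each target fragment \eqref{eq:target-after} literally. Under our erasure, fragment $(s,i)$ becomes exactly $U_{s,i}$, since $A_h$ and the separators are erased. Therefore $z$, the erased frozen value of $S$ in $G_h$, is a concatenation that contains every $U_{s,i}$ as a factor, together with erased images of the auxiliaries, which are words over $\Gamma$.

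The main obstacle is making the invariant of Step 2 precise. It must hold even when a later substitution changes the right-hand side of some $A_t$, or introduces a nonterminal whose rule contains primary symbols. Both are covered by the observation that frozen values never consult primary rules.
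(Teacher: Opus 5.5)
Your proof is correct and is essentially the paper's argument. You delete the rules of $A_1,\ldots,A_h$ and hold the primary symbols fixed as letters. You then observe that later substitutions leave the resulting interpretation of $S$ unchanged, and finally erase and apply \cref{lem:capacity}; the paper instead maintains the interpretations incrementally, which avoids the implicit uniqueness-of-evaluation step behind your ``frozen value is unchanged''. The remaining differences are cosmetic: you also erase markers and separators, which is harmless, and your ``all terminals'' (and ``genuine terminals'') interpreted as empty must exclude $a=A_0$, which has to remain the letter $A_0$ for the $U_{s,i}$ to survive.
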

\begin{proof}
Immediately after phase $h$, regard $A_0,\ldots,A_h$ as distinct terminal letters: remove the defining rules for $A_1,\ldots,A_h$ and evaluate the remaining acyclic grammar with these symbols held fixed. This assigns an interpretation to every symbol and satisfies every rule except the removed ones. Its start word contains all the fragments in~\eqref{eq:target-after}.

Maintain these interpretations through subsequent substitutions. If a fresh symbol $Y$ replaces $v$, assign to $Y$ the concatenation of the current interpretations of the symbols of $v$, leaving all old interpretations unchanged. Every retained rule equation stays true, and the new equation $Y\to v$ is true by definition. This argument also applies when some occurrences of $v$ lie in the removed primary definitions: no equation for such a definition is required. Thus the interpreted start word is unchanged throughout the rest of the execution.

In the final grammar, again delete the $h$ primary definitions and regard their left-hand sides as terminal letters. This is an acyclic grammar of size at most $|G_h|$ whose rule equations are satisfied by the maintained interpretations. Now erase the letter $A_h$ from all interpreted words and keep every other terminal letter. The interpretation of each terminal has length at most one. The resulting start word $z$ still contains every $U_{s,i}$, since none of them contains $A_h$. Apply \cref{lem:capacity} to this grammar and interpretation.
\end{proof}

\subsection{Counting the factors in the targets}

Set
\begin{equation}\label{eq:factor-length}
 L=2q+2=4r+2.
\end{equation}
For $r\geq 10$, we have $L\leq h/4$: it holds at $r=10$, and increasing $r$ adds $4$ to the left side and doubles the right side.

\begin{lemma}\label{lem:binary-multiplicity}
Every factor of $W=\varphi(D)$ of length at least $L$ occurs at most twice.
\end{lemma}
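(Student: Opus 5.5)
It suffices to treat factors of length exactly $L$, since a longer factor $y$ has at most as many occurrences as its length-$L$ prefix. So fix a factor $y$ of $W$ with $|y|=L=2q+2$. The plan is to decode $y$ back to a factor of $D$ of length $q$, and then use that a length-$q$ factor of the linear word $D$ occurs at most once.

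First I would record that $W=\varphi(D)$ splits into aligned blocks $W[2k-1..2k]$, each equal to $01$ or $10$. An occurrence of $y$ starting at position $p$ is \emph{even-aligned} if $p$ is odd, so that it starts at a block boundary, and \emph{odd-aligned} otherwise.

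\textbf{Even-aligned occurrences.} Here $y$ is a concatenation of $q+1$ full blocks, so $y=\varphi(x)$ for a unique $x$ of length $q+1$ with $x$ a factor of $D$ at the corresponding position. Since $\varphi$ is injective and codes letter by letter, two even-aligned occurrences of $y$ give two occurrences of the same length-$(q+1)$ factor, hence of its length-$q$ prefix, in the linear word $D$. Therefore there is at most one even-aligned occurrence.

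\textbf{Odd-aligned occurrences.} Now $y$ begins with the second letter of some block, contains $q$ full blocks, and ends with the first letter of a block. Its interior $y[2..2q+1]$ is $\varphi(x')$ for a length-$q$ factor $x'$ of $D$. The decomposition is determined by $y$ once the parity is fixed, so two odd-aligned occurrences again give two occurrences of $x'$ in $D$. Hence there is at most one odd-aligned occurrence. Every occurrence has one of the two parities, so $y$ occurs at most twice.

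The main obstacle I expect is making sure the decoding is well defined, that is, that the block structure inside $y$ really is read off from the parity alone. This is immediate because the parity fixes the cut points. I would not try to rule out one occurrence of each parity: that can genuinely happen (e.g.\ in $\dots0101\dots$ patterns), and it is why the bound in the lemma is two rather than one. The remaining check is that $L=2q+2$ is long enough in the odd case to contain $q$ full blocks, which holds because $2q+2-2=2q$.
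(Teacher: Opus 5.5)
Your proof is correct and is essentially the paper's argument: fix the parity of the starting position, read off at least $q$ whole codewords inside the occurrence, decode them to a length-$q$ factor of the linear de Bruijn word $D$, and use that such a factor has a unique starting position, so each parity contributes at most one occurrence. Your side remark is wrong, though it does not affect the proof: for factors of length at least $L$, occurrences of both parities cannot coexist. Both parities together would force $y_i\neq y_{i+1}$ for all $i$, so $y\in\{(01)^{q+1},(10)^{q+1}\}=\{\varphi(0^{q+1}),\varphi(1^{q+1})\}$, and $0^{q+1}$ or $1^{q+1}$ in $D$ would make $0^q$ or $1^q$ occur at two cyclic starting positions. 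The bound two is therefore simply what the parity argument yields, not a sharp value.
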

\begin{proof}
Fix the parity of an occurrence's starting position. After deleting at most one letter at each end, the occurrence contains at least $q$ whole codewords of $\varphi$. Equal factors with this starting parity therefore identify equal length-$q$ factors of $D$. Such a factor of $D$ has only one starting position. The corresponding start in $W$ is then fixed as well. There is at most one occurrence of each parity.
\end{proof}

\begin{lemma}\label{lem:target-factors}
Across the $h^2$ words $U_{s,i}$, there are at least $h^3/8$ distinct factors of length $L$.
\end{lemma}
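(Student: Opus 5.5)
The plan is to read each $U_{s,i}$ as a binary word and count distinct length-$L$ factors through \cref{lem:binary-multiplicity}. The symbols $A_{h-1},\ldots,A_0$ are distinct, so $U_{s,i}$ records the word $C_s[i-h+1..i]$. Each letter equal to $1$ in that window contributes its primary symbol, with the index fixing its position inside the window.

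\textbf{Step 1: encode factors as positioned windows.} A length-$L$ factor of $U_{s,i}$ is a run of $L$ consecutive present symbols $A_{u_1}\cdots A_{u_L}$ with decreasing indices. Such a factor determines:
\begin{itemize}
\item the absolute index range, and hence the window of chunk positions it spans, namely positions $i-u_1$ through $i-u_L$ relative to the window;
\item the exact $0/1$ pattern of $C_s$ on that span, since absent indices are exactly the zeros.
\end{itemize}
Because $\varphi$-coded words contain no $00$ or $11$ beyond length two, zeros and ones alternate closely. So $L$ ones span at most about $2L+1$ letters. By \cref{lem:binary-multiplicity} such a span, of length at least $L$, occurs at most twice in $W$. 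Hence the pair (factor of $C_s$, starting index $u_1$) is determined up to a bounded ambiguity.

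\textbf{Step 2: count pairs $(s,i,\text{offset})$.} Each chunk $C_s$ contains at least $h-1$ ones among its $2h$ letters; indeed each codeword $01$ or $10$ contains exactly one $1$. I would fix a starting one at chunk position $p$ with $h+1\le p\le 2h-2L$ or so. For each $i$ with $p+2L\le i\le 2h$ and $i-h+1\le p$, the factor of $U_{s,i}$ formed by the $L$ ones from $p$ onward is present, and its first symbol is $A_{i-p}$. Different $i$ give different first indices, hence different factors. With $p$ ranging over roughly $h/2$ ones in a suitable half and $i$ over roughly $h/2$ values, this gives about $h^2/4$ factors per chunk before ambiguity. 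These may coincide across chunks only when the underlying binary words coincide. Since the factor determines both the index shift $i-p$ and a binary word of length at least $L$, and by \cref{lem:binary-multiplicity} that word appears at most twice, each factor is counted at most twice. The total is then at least $h\cdot(h^2/4)/2=h^3/8$.

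\textbf{Main obstacle.} The delicate part is the bookkeeping of ranges. I must show that for about $h/2$ ones $p$ and about $h/2$ values of $i$, the window $[i-h+1,i]$ contains all $L$ ones from $p$. This needs $p\ge i-h+1$ and $p+2L-1\le i$. Taking $p\in[h/2, h]$ in the first half, restricted to ones (about $h/4$ of them), and $i\in[h+1,\,3h/2]$ would give $h/4\cdot h/2=h^2/8$. With multiplicity $2$ this yields only $h^3/16$. So I would instead choose the ranges more carefully: $p\in[L',h]$ with $i\in[\max(h+1,p+2L),\,p+h-1]$. That gives about $h$ ones for $p$ times about $h-2L\ge h/2$ values of $i$, i.e. roughly $h^2/2$ pairs per chunk. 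Even after halving, this comfortably exceeds $h^3/8$ using $L\le h/4$.
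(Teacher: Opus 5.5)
The multiplicity half of your argument is the paper's. A factor $A_{j_1}\cdots A_{j_L}$ fixes $j_1$ and a binary word of length $j_1-j_L+1\geq L$. By \cref{lem:binary-multiplicity} that word has at most two starting positions in $W$. Each position fixes $s$ and the starting one $p$, and then $i=p+j_1$. So the proof reduces to exhibiting at least $h^3/4$ occurrences, i.e.\ at least $h^2/4$ valid pairs $(p,i)$ per chunk. That is the step you flagged as delicate, and it does not work as written.

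Take your final choice: starting ones $p\in[L',h]$ and $i\in[\max(h+1,p+2L),\,p+h-1]$. There are not ``about $h$'' admissible ones. Each codeword contains exactly one $1$ and chunks are codeword-aligned, so positions $1,\dots,h$ contain exactly $h/2$ ones, the $k$-th at a position $p_k\leq 2k$. Moreover, for $p\leq h-2L+1$ your $i$-range has only $p-1$ elements, not $h-2L$. Hence your count per chunk is at most
\[
\sum_{k=1}^{h/2}\min(2k-1,\,h-2L)=\frac{(h-2L)(h+2L)}{4}=\frac{h^2-4L^2}{4}<\frac{h^2}{4}.
\]
After halving you get strictly fewer than $h^3/8$ distinct factors (at most $3h^3/32$ when $L=h/4$). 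So the claim that the bound is ``comfortably exceeded'' is false.

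The missing occurrences are those starting at ones with $p>h$, which you dropped when you moved from your first range to the last. The clean repair is the paper's count per word instead of per starting one. The window $C_s[i-h+1..i]$ has length $h$ and contains at least $h/2-1$ complete codewords, so $|U_{s,i}|\geq h/2-1$. Hence $U_{s,i}$ has at least $h/2-L\geq h/4$ occurrences of length-$L$ factors. Summing over all $h^2$ words gives at least $h^3/4$ occurrences, and your multiplicity bound then yields $h^3/8$ distinct factors. Alternatively, you can sum over all starting ones $p$, requiring only that the $L$-th one after $p$ lie at or before position $i$. This counts the same occurrences and gives the same total.
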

\begin{proof}
Each $U_{s,i}$ records the $1$'s in a length-$h$ binary interval of $W$. Such an interval contains at least $h/2-1$ complete codewords of $\varphi$, each with one $1$. Hence
\[
 |U_{s,i}|\geq h/2-1.
\]
Since $L\leq h/4$, this gives at least
\[
 |U_{s,i}|-L+1\geq h/2-L\geq h/4
\]
length-$L$ factor occurrences in each $U_{s,i}$, for a total of at least $h^3/4$ occurrences.

It remains to show that any fixed factor is counted at most twice. Write it as
\[
 A_{j_1}A_{j_2}\cdots A_{j_L},\qquad
 h-1\geq j_1>j_2>\cdots>j_L\geq 0.
\]
These indices determine a binary word of length $j_1-j_L+1\geq L$: it has $1$'s exactly at offsets $0,j_1-j_2,\ldots,j_1-j_L$, and $0$'s elsewhere. For an occurrence in $U_{s,i}$, this word occurs in $W$ starting at position
\begin{equation}\label{eq:position}
 2h(s-1)+i-j_1.
\end{equation}
There can be no additional $1$ between the recorded ones, since that would give an intervening primary symbol in the factor. By \cref{lem:binary-multiplicity}, the binary word has at most two starting positions in $W$. Each starting position belongs to a unique chunk, determining $s$, and then~\eqref{eq:position} determines $i$. Within $U_{s,i}$, each primary index occurs at most once, so the occurrence is unique. The multiplicity is therefore at most two. Dividing the total occurrence count by two proves the claim.
\end{proof}

\begin{proof}[Proof of \cref{thm:main}]
The construction and \cref{prop:comparison} give the alphabet, minimum-grammar, and input-length bounds. By \cref{prop:forced}, every execution reaches the state described in~\eqref{eq:target-after}. For any final grammar $G_h$, \cref{lem:freeze,lem:target-factors} give
\[
 \frac{h^3}{8}\leq |\Fac_L(z)|\leq |G_h|(L-1).
\]
Since $L-1=4r+1$, this proves the required output-size bound. Dividing by $g(w_h)\leq 600h^2$ yields
\[
 \frac{|G_h|}{g(w_h)}\geq \frac{h}{4800(4r+1)}
 \geq \frac{h}{24000r}.
\]
Finally, $n_h\leq 2^{50h}$ gives $h\geq (\log n_h)/50$, and $n_h\geq 2^h$ gives $r=\log h\leq\log\log n_h$. Substitution proves the displayed ratio bound in the theorem. As $r$ tends to infinity, the input lengths are unbounded and $h/(4r+1)$ tends to infinity. This establishes a nonconstant lower bound on an infinite family.
\end{proof}

\appendix
\section{The finite morphism calculation}\label{app:morphism}

We prove \cref{lem:morphism} directly for~\eqref{eq:morphism}. The proof has two parts: a synchronization argument reduces power-freeness to images of words of length at most three, and a separate short-factor count gives the constants used in \cref{lem:aux-gain}. Positions in this appendix are numbered from $0$; intervals are half-open.

\subsection{Synchronization}

The morphism commutes with the cyclic permutation $0\mapsto1\mapsto2\mapsto0$. Each $\mu(c)$ starts and ends with $c$. The following table lists all occurrences of the length-seven prefixes and suffixes of the three images in $\mu(a)\mu(b)$, for $a\ne b$, whose starting offsets lie in $\{0,\ldots,18\}$. The symbol $\varnothing$ means that there is no such occurrence.

\begin{table}[htbp]
\centering
\begin{tabular}{@{}lrrrrrr@{}}
\toprule
Factor & $01$ & $02$ & $10$ & $12$ & $20$ & $21$\\
\midrule
0120212 & 0 & 0 & $\varnothing$ & $\varnothing$ & $\varnothing$ & $\varnothing$ \\
2120210 & 12 & 12 & $\varnothing$ & $\varnothing$ & $\varnothing$ & $\varnothing$ \\
1201020 & $\varnothing$ & $\varnothing$ & 0 & 0 & $\varnothing$ & $\varnothing$ \\
0201021 & $\varnothing$ & $\varnothing$ & 12 & 12 & $\varnothing$ & $\varnothing$ \\
2012101 & $\varnothing$ & $\varnothing$ & $\varnothing$ & $\varnothing$ & 0 & 0 \\
1012102 & $\varnothing$ & $\varnothing$ & $\varnothing$ & $\varnothing$ & 12 & 12 \\
\bottomrule
\end{tabular}
\caption{Synchronization in two consecutive images. The column $ab$ refers to $\mu(a)\mu(b)$; entries are starting offsets.}\label{tab:sync}
\end{table}

\begin{lemma}\label{lem:sync}
If $x$ has no equal adjacent letters, then two equal length-$18$ factors in $\mu(x)$ start at positions congruent modulo $19$.
\end{lemma}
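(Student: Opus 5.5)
Our plan is to reduce the statement to the finite data in \cref{tab:sync}. Take an arbitrary length-$18$ factor $y$ of $\mu(x)$ and look at its occurrence starting at $19k+o$ with $0\leq o\leq 18$. It covers part of the block $\mu(x_k)$ and, when $o\geq 2$, part of the next block $\mu(x_{k+1})$. Since $x$ has no equal adjacent letters, every window of two consecutive blocks has the form $\mu(a)\mu(b)$ with $a\neq b$. The local picture is therefore one of the six columns of the table, and the whole task is to show that $o$ is determined by $y$ alone.

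The key observation is that a length-$18$ window beginning at offset $o$ contains one of the six length-$7$ words in the table at a predictable place. If $o\leq 12$, we use the window $[o,o+18)$, which contains the interval $[12,19)$ of the current block, namely the length-$7$ suffix of $\mu(x_k)$. If $o\geq 13$, the window contains $[19,26)$, the length-$7$ prefix of the next block. In both cases $y$ contains, at a known relative offset, one of the listed prefixes or suffixes. The table records where these words can occur inside a window $\mu(a)\mu(b)$ with starting offset in $\{0,\ldots,18\}$: each occurs only at its canonical offset, $0$ for a prefix and $12$ for a suffix, and only in columns whose first letter is the block letter.

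The main step then goes as follows. Suppose $y$ also occurs at $19k'+o'$. Choose the length-$7$ word $u$ inside $y$ at relative position $p$ given by the first occurrence: $p=12-o$ if $o\leq12$, and $p=19-o$ otherwise. In the second occurrence, $u$ starts at absolute position $19k'+o'+p$. We reduce this modulo $19$ and pass to the pair of consecutive blocks in which it starts. The table then forces the offset to be $12$ when $u$ is a suffix and $0$ when $u$ is a prefix, so $o'+p\equiv o+p \pmod{19}$, and hence $o'\equiv o$.

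We need to guard against two things. First, $u$ could be simultaneously a prefix of one image and a suffix of another. The six words in the table are pairwise distinct, so this does not happen. Second, the second occurrence must actually lie within a two-block window, which needs the last block to be followed by something. Near the right end of $\mu(x)$ we handle this by noting that $u$ must still fit inside $\mu(x)$, and reading it in $\mu(a)\mu(b)$ for the final two blocks. If $x$ has length $1$, equal length-$18$ factors of a single $19$-block start at offsets $0$ or $1$; the factors $\mu(c)[0,18)$ and $\mu(c)[1,19)$ differ, as a direct check on the three images shows. We expect this careful bookkeeping of offsets and boundary cases to be the main obstacle. The table itself is a routine finite check, which we take as verified.
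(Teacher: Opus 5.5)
Your strategy is the paper's: anchor the length-$18$ window to a length-$7$ block prefix or suffix, and use \cref{tab:sync} to pin that word's offset modulo $19$. However, your case split has an off-by-one error that leaves one offset uncovered.

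For $o=0$ the window is $[0,18)$. It does not contain position $18$, so it does not contain the suffix interval $[12,19)$. Correspondingly, your choice $p=12-o=12$ would place the length-$7$ word $u$ at $y[12,19)$, past the end of $y$, which has length $18$. Your own remark that the occurrence reaches the next block only for $o\geq 2$ already shows that $o=0$ is exactly the offset at which the window stops short of the block end. So, as written, the argument does not handle occurrences that start at a block boundary.

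The repair is the one the paper uses. For $o=0$, take $u$ to be the block prefix $[0,7)$ with $p=0$. The table places every listed prefix only at offset $0$, so the second occurrence satisfies $o'\equiv 0 \pmod{19}$. Alternatively, since the conclusion is symmetric, anchor on whichever occurrence has nonzero offset; if both offsets are $0$ there is nothing to prove.

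A smaller point concerns the right end, which you should make precise. Suppose the second occurrence of $u$ lies inside the last block $\mu(c)$, where $c$ is the last letter of $x$. In the window formed by the last two actual blocks, $u$ then starts at an offset of at least $19$, and the table does not cover such offsets. Instead, append a letter $b\neq c$ to $x$, as the paper does. This changes no existing factor, and you can then read $u$ at an offset of at most $12$ in $\mu(c)\mu(b)$, where the table applies. With these two corrections your proof coincides with the paper's.
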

\begin{proof}
Let $t$ be the starting offset of the first occurrence in its $\mu$-block. If $t=0$, its first seven letters are the block prefix. If $1\leq t\leq12$, the occurrence contains the last seven letters of that block. If $13\leq t\leq18$, it contains the first seven letters of the next block. By \cref{tab:sync}, these length-seven words can start only at offset $0$ for a prefix or offset $12$ for a suffix. In the equal occurrence, the same seven letters at the same relative position therefore fix the starting residue modulo $19$. For an occurrence in the final block, one may append a different letter to $x$ when applying the two-block table; this does not change any existing factor.
\end{proof}

\subsection{Preservation of power-freeness}

For a word $y$ and $p\geq1$, let $R_y(p)$ be the largest number of consecutive true equalities
\[
 y[j]=y[j+p]
\]
as $j$ runs through $0,\ldots,|y|-p-1$; put $R_y(p)=0$ if $p\geq|y|$. A factor of period $p$ and length greater than $7p/4$ exists exactly when
\begin{equation}\label{eq:run-test}
 4R_y(p)>3p.
\end{equation}
Indeed, a run of $R$ equalities gives a factor of length $p+R$ and period $p$, and every factor of length greater than $p$ and period $p$ supplies such a run.

Every nonempty $(7/4)^+$-free word of length at most three is a cyclic renaming of one of
\[
 \mathcal X=\{0,01,02,010,012,020,021\}.
\]
This follows because equal adjacent letters are forbidden, and the first letter can be renamed to $0$. For these seven words, define
\[
 R(p)=\max_{x\in\mathcal X}R_{\mu(x)}(p).
\]
Their images have length at most $57$. Only $p<4\cdot57/7$, hence $1\leq p\leq32$, could violate power-freeness. \Cref{tab:runs} gives all the required values and shows $R(p)\leq\lfloor3p/4\rfloor$ in every case. The entries can be obtained by scanning the explicitly given words: start with $b_{-1}=0$ and set $b_j=b_{j-1}+1$ when $y[j]=y[j+p]$, and $b_j=0$ otherwise; then take the maximum $b_j$.

\begin{table}[htbp]
\centering
\begin{tabular}{@{}rrr@{\qquad\qquad}rrr@{}}
\toprule
$p$ & $\lfloor3p/4\rfloor$ & $R(p)$ & $p$ & $\lfloor3p/4\rfloor$ & $R(p)$\\
\midrule
1 & 0 & 0 & 17 & 12 & 3 \\
2 & 1 & 1 & 18 & 13 & 7 \\
3 & 2 & 2 & 19 & 14 & 0 \\
4 & 3 & 3 & 20 & 15 & 7 \\
5 & 3 & 2 & 21 & 15 & 3 \\
6 & 4 & 3 & 22 & 16 & 2 \\
7 & 5 & 4 & 23 & 17 & 2 \\
8 & 6 & 5 & 24 & 18 & 6 \\
9 & 6 & 1 & 25 & 18 & 9 \\
10 & 7 & 2 & 26 & 19 & 5 \\
11 & 8 & 3 & 27 & 20 & 3 \\
12 & 9 & 5 & 28 & 21 & 2 \\
13 & 9 & 6 & 29 & 21 & 2 \\
14 & 10 & 4 & 30 & 22 & 5 \\
15 & 11 & 2 & 31 & 23 & 4 \\
16 & 12 & 2 & 32 & 24 & 9 \\
\bottomrule
\end{tabular}
\caption{All period checks for images of the seven short representatives.}\label{tab:runs}
\end{table}

We now show that $\mu$ preserves $(7/4)^+$-freeness. Suppose otherwise, and choose a shortest $(7/4)^+$-free word $x$ such that $\mu(x)$ has a factor $\mu(x)[a:b)$ of length $\ell=b-a$ and period $p$ with $4\ell>7p$. The preceding checks show that $k=|x|\geq4$. By minimality, the factor touches both the first and last $\mu$-blocks, so
\begin{equation}\label{eq:bad-bounds}
 0\leq a\leq18,\qquad 19(k-1)+1\leq b\leq19k,
 \qquad \ell\geq19(k-2)+2\geq40.
\end{equation}
The period gives equal length-$(\ell-p)$ factors at positions $a$ and $a+p$. Since
\[
 \ell-p>\frac{3\ell}{7}\geq\frac{120}{7}>17,
\]
these equal factors have length at least $18$. The word $x$ has no equal adjacent letters, so \cref{lem:sync} implies $p=19d$ for an integer $d\geq1$. Moreover,
\[
 d<\frac{4k}{7}<k-1,
\]
and hence $d\leq k-2$.

We recover a period $d$ in $x$. The equality at position $18$ is available, because $a\leq18$ and
\[
 b-p\geq19(k-d-1)+1\geq20.
\]
Comparing the last letters of blocks $0$ and $d$ gives $x[0]=x[d]$. For $1\leq i\leq k-d-1$, position $19i$ lies in $[a,b-p)$, so comparison of the first letters of blocks $i$ and $i+d$ gives $x[i]=x[i+d]$. Thus $x$ has period $d$. But $7p<4\ell\leq76k$ implies $7d<4k$, contradicting the power-freeness of $x$.

The one-letter word $0$ is $(7/4)^+$-free. Induction now proves that every $\mu^e(0)$ is $(7/4)^+$-free, as asserted in \cref{lem:morphism}.

\subsection{Counts of factors of lengths one, two, and three}

Let $N_v(y)$ be the number of all occurrences of $v$ in $y$, including overlapping occurrences. Since the letter multiplicities in $\mu(0)$ are $(6,6,7)$,
\begin{equation}\label{eq:count-blocks}
 N_v(\mu^2(0))=6N_v(\mu(0))+6N_v(\mu(1))+7N_v(\mu(2))+E_v,
\end{equation}
where $E_v$ counts occurrences crossing a block boundary. \Cref{tab:counts} gives each term for every word of length at most three without equal adjacent letters. All other words of these lengths have count zero by power-freeness.

For clarity, the boundary counts require only the six adjacent-pair counts in $\mu(0)$:
\[
 (P_{01},P_{02},P_{10},P_{12},P_{20},P_{21})=(2,3,2,4,3,4).
\]
Each $\mu(x)$ begins with $x(x+1)$ and ends with $(x+1)x$, with addition modulo three. For a one-letter word, $E_v=0$; for a two-letter word, $E_v=P_v$; and for a three-letter word,
\begin{equation}\label{eq:boundary-counts}
 E_v=\sum_{x\ne y}P_{xy}
 \bigl([v=(x+1)xy]+[v=xy(y+1)]\bigr),
\end{equation}
where a bracket is $1$ if its equality holds and $0$ otherwise. The internal counts come from sliding $v$ along the three length-$19$ images in~\eqref{eq:morphism}; \cref{eq:count-blocks,eq:boundary-counts} then give the last two columns.

\begin{table}[htbp]
\centering
\begin{tabular}{@{}lrrrrr@{}}
\toprule
$v$ & $N_v(\mu(0))$ & $N_v(\mu(1))$ & $N_v(\mu(2))$ & $E_v$ & $N_v(\mu^2(0))$\\
\midrule
0 & 6 & 7 & 6 & 0 & 120 \\
1 & 6 & 6 & 7 & 0 & 121 \\
2 & 7 & 6 & 6 & 0 & 120 \\
01 & 2 & 3 & 4 & 2 & 60 \\
02 & 3 & 4 & 2 & 3 & 59 \\
10 & 2 & 3 & 4 & 2 & 60 \\
12 & 4 & 2 & 3 & 4 & 61 \\
20 & 3 & 4 & 2 & 3 & 59 \\
21 & 4 & 2 & 3 & 4 & 61 \\
010 & 0 & 2 & 1 & 0 & 19 \\
012 & 2 & 1 & 3 & 2 & 41 \\
020 & 0 & 2 & 0 & 6 & 18 \\
021 & 3 & 2 & 1 & 4 & 41 \\
101 & 0 & 0 & 2 & 4 & 18 \\
102 & 1 & 3 & 2 & 3 & 41 \\
120 & 3 & 2 & 1 & 4 & 41 \\
121 & 1 & 0 & 2 & 0 & 20 \\
201 & 1 & 3 & 2 & 3 & 41 \\
202 & 2 & 1 & 0 & 0 & 18 \\
210 & 2 & 1 & 3 & 2 & 41 \\
212 & 2 & 0 & 0 & 8 & 20 \\
\bottomrule
\end{tabular}
\caption{The short-factor counts used in \cref{lem:aux-gain}.}\label{tab:counts}
\end{table}

The maximum counts in the last column, grouped by factor length, are $121$, $61$, and $41$. As $\mu$ commutes with cyclic renaming, the counts in $\mu^2(1)$ and $\mu^2(2)$ are permutations of these counts. This proves the second assertion of \cref{lem:morphism} and completes its proof.

\FloatBarrier
\section{Lean formalisation}\label{app:lean}

The formalisation uses Lean~4.19.0 and a pinned version of Mathlib. It defines straight-line grammars, their size and semantics, and global \Greedy{} with left-to-right nonoverlapping occurrence replacement. It also proves equivalence with the maximal-factor formulation in~\cite{charikar}. The verified argument includes the existence of the cyclic de Bruijn seed, preservation of power-freeness by the morphism and its finite counting certificates, the forced substitution phases with arbitrary intervening cleanup, the final distinct-factor count, and the construction of a small comparison grammar. These components yield the quantitative lower bound and its logarithmic form. Complete executions are proved to exist, and the final statements quantify over every terminal execution on the selected inputs, including arbitrary maximizing ties. Verification builds the proof and audits the transitive axiom dependencies of all project theorems. No admitted proofs or additional axioms are used; the audit permits only Lean's standard propositional extensionality, classical choice, and quotient soundness axioms. The repository~\cite{lean-proof} includes a correspondence between the paper and the main formal declarations.

\section*{Acknowledgement}GPT-6 Astra was used in developing the proof and preparing the manuscript.

\end{document}